\let\llncssubparagraph\subparagraph
\let\subparagraph\paragraph
\let\subparagraph\llncssubparagraph
\patchcmd{\table}{0\p@}{5\p@}{}{}
\patchcmd{\table}{10\p@}{5\p@}{}{}
\algrenewcommand\algorithmicrequire{\textbf{Precondition:}}
\algrenewcommand\algorithmicensure{\textbf{Postcondition:}}
\newcommand{\newC}{C$^-$\xspace}
\newcommand*\Let[2]{\State #1 $\gets$ #2}
\newcommand{\bv}[2]{\mathcal{BV}(#1, #2)}
\newcommand{\kalashnikov}{{\sc Kalashnikov}\xspace}
\title{Using Program Synthesis for Program Analysis}
\author{Cristina David \and Daniel Kroening \and Matt Lewis}
\institute{University of Oxford}
\begin{document}
\maketitle
\pagestyle{headings}  

\begin{abstract}
In this paper, we identify a fragment of second-order logic with restricted quantification that is 
expressive enough to capture numerous static analysis problems (e.g. safety proving, bug finding, termination and
non-termination proving, superoptimisation). 
We call this fragment the {\it synthesis fragment}.
Satisfiability of a formula in the synthesis fragment is decidable over finite domains; specifically the decision problem is NEXPTIME-complete.
If a formula in this fragment is satisfiable, a solution consists of a satisfying assignment from
the second order variables to \emph{functions over finite domains}.
To concretely find these solutions, we synthesise \emph{programs} that compute the functions.
Our program synthesis algorithm is complete for finite state programs, i.e. every \emph{function} over
finite domains is computed by some \emph{program} that we can synthesise. 
We can therefore use our synthesiser as a decision procedure for the synthesis fragment of second-order logic,
which in turn allows us to use it as a powerful backend for many program analysis tasks.
To show the tractability of our approach, we evaluate the program synthesiser on several static analysis problems.

\end{abstract}


\section{Introduction}
Fundamentally, every static program analysis is searching for a {\em program proof}.
For safety analysers this proof takes the form of 
a program invariant \cite{DBLP:conf/popl/CousotC77}, for bug finders it's a counter-model \cite{DBLP:conf/dac/ClarkeKY03}, for termination analysis it 
can be a ranking function \cite{Floyd67}, whereas for non-termination it's a recurrence set \cite{DBLP:conf/popl/GuptaHMRX08}.
Finding each of these proofs was subject to extensive research resulting in a multitude of techniques.

The process of searching for a proof can be roughly seen as a refinement loop with 
two phases. One phase is heuristic in nature, e.g. 
adjusting the unwinding depth (for bounded model checking \cite{DBLP:conf/dac/ClarkeKY03}),
refining the set of predicates (for predicate abstraction and interpolation \cite{DBLP:conf/cav/ClarkeGJLV00,DBLP:conf/cav/McMillan06}),
selecting a template (for template-based analyses \cite{dblp:conf/tacas/leikeh14}),
applying a widening operator (for abstract interpretation based techniques \cite{DBLP:conf/popl/CousotC77}),
whereas the other phase usually involves a call to a decision procedure. 
From the perspective of the proof,
the heuristic constrains the universe of potential proofs to just one candidate 
(by fixing the unwinding bound, the template, the set of predicates, etc), which 
is then validated by the other phase.
The unknowns in the first phase are proofs (second-order entities), 
whereas the unknowns in the second phase are program variables.
Essentially, the first phase reduces a second-order problem to a first-order/propositional one.

Such a design makes it difficult to separate the problem's formulation (the second-order problem)
from parts of the solving process, resulting in 
analyses that are cluttered and fragile. 
Any change to the search process causes changes to the whole analysis.
Ideally, we would like a modular design, where the search for a solution is encapsulated and, thus, completely separated from the formulation of the problem.

The existing design is dictated by the state of the art in solver technology.
While the SAT/SMT technologies nowadays allow solving industrial sized instances, 
there is hardly any progress made for second-order solvers.


\paragraph{Our Contributions.}
In this paper, we take a step towards addressing this situation by 
identifying a fragment of second-order logic with restricted quantification that is 
expressive enough to capture numerous static analysis problems (e.g. safety proving, bug finding, termination and
non-termination proving, superoptimisation) and solving it via {\it program synthesis}.
We call this fragment the {\it synthesis fragment}.
The synthesis fragment is decidable over finite domains, and its decision problem is NEXPTIME-complete.

If a formula in the synthesis fragment is satisfiable, a solution consists of a satisfying assignment from
the second order variables to \emph{functions over finite domains}.
%
Additionally, 
every function over
finite domains is computed by some \emph{program} that we can synthesise.  This correspondence between
logical satisfiability and program synthesis allows us to design the decision procedure
for the synthesis fragment as a program synthesiser, which in turn allows us to use it as a powerful backend for
many program analysis tasks.

The program synthesiser uses a combination of bounded model checking, explicit state
model checking and genetic programming. 
Additionally, one of its important strategies is {\it generalisation} -- it finds simple
solutions that solve a restricted case of the problem, then it tries to generalise to
a full solution. 
Consequently, our approach achieves surprisingly good
performance for a problem with such high complexity,
as whown by our experimental evaluation 
on several static analysis problems. 



\paragraph{Our program synthesiser vs Syntax Guided Synthesis (SyGuS) \cite{sygus}.}
Recently, a successful approach to program synthesis is denoted by SyGuS. The SyGuS synthesisers
supplement the logical specification with a syntactic template that constrains the space of allowed implementations.
Thus, each semantic specification is accompanied by a syntactic specification in the form of a grammar. 

In contrast to SyGuS, our program synthesiser is optimised for program analysis in the following ways:
\begin{enumerate}
\item Our specification language is a fragment of C, which results in concise specifications.  Using our tool to build a program analyser only requires providing 
a generic specification of the problem to solve.  The programs to be analysed do not need to be modified, symbolically executed or compiled to an intermediate language.
As shown by our experimental results in Sec~\ref{sec:kalashnikov-experiments}, our specifications are an order of magnitude smaller than the equivalent SyGuS specifications.
\item We do not require specifications to include a grammar restricting the space of solutions.  The language in which we synthesise our programs is universal, i.e.
every finite function is computed by at least one program in our language.  Furthermore this language is concise, which means that programs in this language tend to be
short.  This property of not requiring a grammar makes it easy to target our specifications from an automated program analysis tool, since the frontend analyser
need not guess the syntax of a solution.
\item Our solution language has first-class support for programs computing multiple outputs, which allows us to directly encode lexicographic ranking functions of
unbounded dimension.
We were unable to find a way to encode such functions with SyGuS.
\item Our solver has first-class support for synthesising programs containing constants.  By contrast, SyGuS builds constants using unary encoding which requires
exponentially more space.  We discuss this in Section~\ref{sec:kalashnikov-param-space}.
\item Our solver's runtime is heavily influenced by the {\em length of the shortest
proof}, i.e. the Kolmogorov complexity of the problem as we will discuss in Sec~\ref{sec:kalashnikov-proofs}.
If a short proof exists, then the solver will find it quickly. 
This is particularly useful for program analysis problems,
where, if a proof exists, then most of the time many proofs exist and some are short 
(\cite{DBLP:conf/aplas/KongJDWY10} relies on a similar remark about loop invariants).
\end{enumerate}




\paragraph{Related logics.}
Other second-order solvers are introduced in \cite{DBLP:conf/pldi/GrebenshchikovLPR12,DBLP:conf/cav/BeyenePR13}. As opposed to ours,
these are specialised for Horn clauses and the logic they handle is undecidable.
Wintersteiger et al. present in \cite{DBLP:conf/fmcad/WintersteigerHM10} a decision procedure for  
a logic related to the synthesis fragment, the Quantified bit-vector logic, which is  
a many sorted first-order logic formula where the sort of every variable is a bit-vector sort. 
It is possible to reduce formulae in the synthesis fragment over finite domains to Effectively Propositional Logic \cite{DBLP:journals/jar/PiskacMB10}, 
but the reduction would require additional axiomatization and would 
increase the search space, thus defeating the efficiency we are
aiming to achieve.

\section{The Synthesis Fragment}

In this section, we identify a fragment of second-order logic
with a constrained use of quantification that is expressive enough to encode
numerous static analysis problems. 
We will suggestively refer to the fragment as the \emph{synthesis fragment}:

\begin{definition}[Synthesis Fragment ($SF$)] \label{def:synthesis-frag}
A formula is in the \emph{synthesis fragment} iff it is of the form
 \[
  \exists P_1 \ldots P_m . Q_1 x_1 \ldots Q_n x_n . \sigma(P_1,\ldots,P_m, x_1,\ldots, x_n)
 \]
where the $P_i$ range over functions, 
the $Q_i$ are either $\exists$ or $\forall$,
the $x_i$ range over ground terms and $\sigma$ is a quantifier-free formula.

%
\end{definition}
%

If a pair $(\vv{P}, \vv{x})$ is a satisfying model for the synthesis formula, then we write $(\vv{P}, \vv{x}) \models \sigma$.
For the remainder of the presentation, we drop the vector notation and write $x$ for $\vv{x}$, with the understanding
that all quantified variables range over vectors.






\section{Program Analysis Specifications in the Synthesis Fragment}

Program analysis  problems can be reduced to the problem of finding
solutions to a second-order constraint \cite{DBLP:conf/pldi/GulwaniSV08,DBLP:conf/pldi/GrebenshchikovLPR12,synth-termination}.
In this section we will show that, the synthesis fragment is expressive enough to capture 
many interesting such problems. 
When we describe analyses involving loops, we will characterise the loop
as having initial state $I$, guard $G$, transition relation $B$.

\paragraph{Safety Invariants.}
Given a safety assertion $A$, a safety invariant is a set of states $S$ which is inductive with respect to the program's transition
relation, and which excludes an error state.  A predicate $S$ is a safety invariant iff it satisfies the following criteria:%
\begin{align}
  \exists S . \forall x, x' .  & I(x) \rightarrow S(x) ~ \wedge \label{safety_base}\\
  & S(x) \wedge G(x) \wedge B(x, x') \rightarrow S(x') ~ \wedge \label{safety_inductive}\\
  & S(x) \wedge \neg G(x) \rightarrow A(x) \label{safety_safe}
\end{align}
(\ref{safety_base}) says that each state reachable on entry to the loop is in the set $S$, and in
combination with (\ref{safety_inductive}) shows that every state that can be reached by the loop
is in $S$.  The final criterion~(\ref{safety_safe}) says that if the loop exits while in an $S$-state,
the assertion $A$ is not violated.  

\paragraph{Termination and non-termination.}
As shown in~\cite{synth-termination}, termination of a loop can be encoded as the following
formula, where $W$ is an inductive invariant of the loop that is established by the initial states $I$ if the
loop guard $G$ is met, and $R$ is a ranking function as restricted by $W$: 
%
 \begin{align*}
  \exists R, W . \forall x, x' . & I(x) \wedge G(x) \rightarrow W(x) ~ \wedge \\
                                 & G(x) \wedge W(x) \wedge B(x, x') \rightarrow W(x') \wedge R(x) > 0 
  \wedge R(x) > R(x')
 \end{align*}
%
Similarly, non-termination can be expressed in the synthesis fragment as follows:
%
\label{def:snt}
  $$\exists N, C, x_0 . \forall x .  N(x_0) ~\wedge~    N(x) \rightarrow G(x) ~ \wedge 
		 N(x) \rightarrow B(x, C(x)) \wedge N(C(x))$$

%
$N$ denotes a recurrence set, i.e. a nonempty set of
states such that for each $s \in N$ there exists a transition to some $s' \in N$, and
$C$ is a Skolem function that chooses the successor $x'$.
More details on the formulations for termination and non-termination can be found in \cite{synth-termination}.




\section{The Synthesis Fragment over Finite Domains} \label{sec:finite.domain}

When interpreting the ground terms over a finite domain $\mathcal{D}$, the synthesis fragment is decidable
and its 
decision problem is NEXPTIME-complete.
We use the notation $SF_\mathcal{D}$ to denote the synthesis fragment over a finite domain $\mathcal{D}$. 


\begin{theorem}[$SF_\mathcal{D}$ is NEXPTIME-complete]
 For an instance of Definition~\ref{def:synthesis-frag} with $n$ first-order variables, where the ground terms are interpreted over $\mathcal{D}$,
 checking the truth of the formula is NEXPTIME-complete.
\end{theorem}

\begin{proof}
In Appendix~\ref{app:finite-domain}.
\end{proof}



Next, we are concerned with building a solver for $SF_\mathcal{D}$.
%
%
A satisfying model for a formula in $SF_\mathcal{D}$ 
is an assignment mapping
each of the second-order variables to some function of the appropriate type and arity.
When deciding whether a particular $SF_\mathcal{D}$ instance is satisfiable, we should think
about how solutions are encoded and in particular how a function is to be encoded.
The functions all have a finite domain and co-domain, so their canonical representation
would be a finite set of ordered pairs.  Such a set is exponentially large in the size of
the domain, so we would prefer to work with a more compact representation if possible.

We will generate \emph{finite state programs} that compute the functions and represent these
programs as finite lists of instructions in SSA form.  This representation has the following
properties, proofs for which can be found in Appendix~\ref{app:encodings}.

\begin{theorem}
\label{thm:l-universal}
Every total, finite function is computed by at least one program in this language
\end{theorem}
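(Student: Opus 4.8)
The plan is to prove universality by an explicit, loop-free construction: given an arbitrary total finite function $f : \mathcal{D}^a \to \mathcal{D}^b$, I would exhibit a concrete program in the language that agrees with $f$ on every input. Since only existence is required, I would not try to find a \emph{short} or canonical program; instead I would build the most naive table-driven realisation and argue only that it lies within the language. The central observation is that because $\mathcal{D}$ is finite, $f$ has only finitely many input--output pairs, so it can be written as a finite case analysis, and a finite case analysis unrolls into a straight-line SSA program of finite length. Thus the restriction to finite lists of SSA instructions is not a genuine limitation for finite functions, and in particular no loops are needed for mere existence.

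Concretely, I would first reduce the word-level problem to a bit-level one. Encoding each element of $\mathcal{D}$ as a bit-vector of width $w = \lceil \log_2 |\mathcal{D}| \rceil$ turns $f$ into a tuple of Boolean functions $f_1, \ldots, f_{bw}$, each on $aw$ input bits. It therefore suffices to show that (i) every single Boolean function is computable in the language, and (ii) individual bits can be isolated from, and reassembled into, machine words. For (i) I would use the disjunctive normal form of each $f_k$: writing $f_k(\vec z) = \bigvee_{\vec c\,:\,f_k(\vec c)=1} \bigwedge_{j} \ell_j$, where $\ell_j$ is $z_j$ or its negation according to $\vec c$, expresses $f_k$ using only conjunction, disjunction and negation. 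Assembling these $bw$ formulas gives a dataflow DAG, which is exactly a straight-line program in SSA form, of length at most exponential in the domain size but always finite --- sufficient for the theorem.

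What remains, and what I expect to be the main obstacle, is discharging the assumption that the concrete instruction set of the language (fixed in Appendix~\ref{app:encodings}) actually supplies a \emph{functionally complete} basis together with the bit-manipulation operations needed for step (ii). I would verify that the available operations generate $\{\wedge, \vee, \neg\}$ --- for a bit-vector language this is immediate from bitwise-and, bitwise-or and bitwise-not, or from any single NAND-style primitive --- and that constants, shifts and masks let one extract a chosen bit and place a computed bit into a chosen position. Once functional completeness and bit addressing are established, the DNF construction goes through verbatim. An alternative, which I would use if the language exposes an if-then-else or selection primitive rather than a clean Boolean basis, is a multiplexer chain: enumerate the finitely many inputs $d_1,\ldots,d_N$, test the input for equality against each $d_i$ using the language's constants and comparison, and select the constant output $f(d_i)$. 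This again produces a finite SSA program and sidesteps any reasoning about DNF size, so between the two constructions the theorem holds for any reasonable choice of primitives.
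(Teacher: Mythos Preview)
Your proposal is correct, and in fact your ``alternative'' construction---the multiplexer chain that tests the input for equality against each domain element and uses an if-then-else to select the corresponding constant output---is exactly the paper's proof. The paper builds the program
\[
t_1 = \mathtt{EQ}(x,1),\quad t_2 = \mathtt{ITE}(t_1, f(1), f(0)),\quad t_3 = \mathtt{EQ}(x,2),\quad t_4 = \mathtt{ITE}(t_3, f(2), t_2),\ \ldots
\]
using the \texttt{eq} and \texttt{ite} primitives that the language $\mathcal{L}$ provides, and observes that this yields a program of $O(|\mathcal{S}|)$ instructions.

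Your primary route, the bit-level DNF construction, is a genuinely different argument. It also goes through here since $\mathcal{L}$ contains bitwise \texttt{and}, \texttt{or}, \texttt{not} and shifts, so functional completeness and bit extraction are available. The trade-off is that the DNF route establishes universality from a weaker basis (no selection primitive needed) but may produce programs exponentially longer than the ITE chain, whereas the paper's ITE construction is both simpler and does double duty: the same $O(|\mathcal{S}|)$-instruction bound is exactly what is needed for the companion Theorem~\ref{thm:l-concise} on optimal conciseness, which your DNF approach would not deliver.
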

\begin{theorem}
\label{thm:l-concise}
Furthermore, this representation is optimally
concise -- there is no encoding that gives a shorter representation to every function.
\end{theorem}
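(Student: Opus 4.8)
The plan is to establish optimality by a pigeonhole argument, after fixing what counts as an \emph{encoding}. I model an encoding as an injective map $E : \mathcal{F} \to \{0,1\}^*$ from the (countable) set $\mathcal{F}$ of total finite functions to binary words, injectivity being precisely the requirement that the code be uniquely decodable; write $|E(f)|$ for the length of the code word. Let $E_1$ denote our SSA-list representation, with $|E_1(f)|$ the length of the canonical program it assigns to $f$. The goal is to show that no injective $E_2$ can satisfy $|E_2(f)| < |E_1(f)|$ for \emph{every} $f$, i.e.\ that $E_1$ is not strictly dominated.

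First I would record the two combinatorial facts the argument rests on. Over $\{0,1\}$ there are exactly $S_\ell := 2^{\ell+1}-1$ words of length at most $\ell$, so for \emph{any} injective $E$ the counting function $N_E(\ell) := \lvert\{f : |E(f)| \le \ell\}\rvert$ obeys $N_E(\ell) \le S_\ell$, since distinct functions must receive distinct words. The crux is a complementary \emph{completeness} property of our own code: that $E_1$ is in fact a bijection between $\mathcal{F}$ and $\{0,1\}^*$, so that $N_{E_1}(\ell) = S_\ell$ exactly (the weaker bound $N_{E_1}(\ell) > S_{\ell-1}$ would already suffice). Universality, Theorem~\ref{thm:l-universal}, delivers one half --- every function is named by some program --- while the SSA instruction format is designed so that every binary word decodes to a well-formed program, delivering the other half.

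I expect the completeness property to be the main obstacle, precisely because programs are \emph{many-to-one} onto functions: a single function is computed by unboundedly many programs, so after collapsing to one canonical program per function I must argue that the canonical words still leave no gaps in $\{0,1\}^*$ rather than skipping infinitely many patterns. Equivalently, I must rule out the degenerate situation in which short programs compute too few \emph{distinct} functions, since in a wasteful language an adversary could shift every code word down by a bit and win. Here I would lean on the conciseness of the instruction set and on a length-lexicographic choice of canonical program to show that the number of functions of complexity at most $\ell$ keeps pace with the number of available words.

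Granting completeness, the contradiction is one line. Suppose $E_2$ is injective with $|E_2(f)| < |E_1(f)|$, hence $|E_2(f)| \le |E_1(f)| - 1$, for all $f$. Then $\{f : |E_1(f)| \le \ell\} \subseteq \{f : |E_2(f)| \le \ell - 1\}$, whence $S_\ell = N_{E_1}(\ell) \le N_{E_2}(\ell-1) \le S_{\ell-1}$; but $S_\ell = S_{\ell-1} + 2^\ell > S_{\ell-1}$, a contradiction. Thus no encoding is uniformly shorter, giving the claimed optimality. I would close by noting that the same counting works verbatim over any fixed alphabet, and that the bijection viewpoint is exactly what connects this statement to the Kolmogorov-complexity reading of proof length used later in the paper.
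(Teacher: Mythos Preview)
Your proposal and the paper diverge at the very first step: you read ``optimally concise'' as exact non-domination (no injective $E_2$ with $|E_2(f)|<|E_1(f)|$ for all $f$), whereas the paper only proves \emph{asymptotic} worst-case optimality. Concretely, the paper argues via two short lemmas: first, any function $f:\mathcal{S}\to\mathcal{S}$ is computed by an explicit ITE lookup-table program of $O(|\mathcal{S}|\log|\mathcal{S}|)$ bits; second, since there are $|\mathcal{S}|^{|\mathcal{S}|}$ such functions, pigeonhole forces \emph{every} encoding to spend at least $\log_2(|\mathcal{S}|^{|\mathcal{S}|})=|\mathcal{S}|\log_2|\mathcal{S}|$ bits on some function. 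Matching upper and lower bounds give asymptotic optimality, and that is all the paper claims.

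Your route is more ambitious but has a real gap exactly where you flag it. The crux of your argument is the density claim $N_{E_1}(\ell)>S_{\ell-1}$ (or the stronger bijection), and you do not establish it; ``leaning on the conciseness of the instruction set'' is not a proof. In fact, for the concrete language $\mathcal{L}$ of the paper this claim is almost certainly false: there is massive redundancy among short programs (commutativity, nops like \texttt{add x 0}, dead instructions, unused constant slots), so the number of \emph{distinct} functions realised by programs of bit-length $\le\ell$ falls well short of $2^{\ell}$, and a competitor that simply re-indexes canonical programs without gaps would beat $E_1$ by a constant number of bits on every function. So the literal statement you are trying to prove is likely not true of $E_1$ as actually defined, and the paper sidesteps the issue by retreating to the asymptotic claim. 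If you want to salvage your approach, you would need either to redefine $E_1$ as a gap-free re-enumeration of canonical programs (at which point the theorem becomes a tautology about bijections) or to weaken the conclusion to the $O(\cdot)$ statement the paper actually proves.
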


\paragraph{Finite State Program Synthesis}

To formally define the finite state synthesis problem, we need to fix some notation.
We will say that a program $P$ is a finite list of instructions in SSA form, where no
instruction can cause a back jump, i.e. our programs are loop free and non-recursive.
Inputs $x$ to the program are drawn from some finite domain $\mathcal{D}$.
The synthesis problem is given to us in the form of a specification $\sigma$ which is
a function taking a program $P$ and input $x$ as parameters and returning a boolean
telling us whether $P$ did ``the right thing'' on input $x$.
Basically, the finite state synthesis
problem checks the truth of Definition~\ref{def:finite-synth-formula}.

\begin{definition}[Finite Synthesis Formula]
 \label{def:finite-synth-formula}
\[
 \exists P . \forall x \in \mathcal{D} . \sigma(P, x)
\]
\end{definition}

To express the specification $\sigma$, we introduce a function
$\mathtt{exec}(P, x)$ that returns the result of running program $P$ with input $x$.
Since $P$ cannot contain loops or recursion, $\mathtt{exec}$ is a total function.

\begin{example}
The following finite state synthesis problem is satisfiable:
\begin{center}
$ \exists P . \forall x \in \mathbb{N}_8 . \mathtt{exec}(P, x) \geq x$
\end{center}

One such program $P$ satisfying the specification is \verb|return 8|, which just returns 8 for any input.
\end{example}

We now present our main theorem, which says that satisfiability of $SF_\mathcal{D}$ can be reduced to finite state
program synthesis.  The proof of this theorem can be found in
Appendix~\ref{app:proofs}.

\begin{theorem}[$SF_\mathcal{D}$ is Polynomial Time Reducible to Finite Synthesis]
 Every instance of 
Definition~\ref{def:synthesis-frag}, where the ground terms are interpreted over $\mathcal{D}$
 is polynomial time reducible to an instance
 of Definition~\ref{def:finite-synth-formula}.
\end{theorem}

\begin{corollary}
\label{cor:finite-synth-nexp}
 Finite-state program synthesis is NEXPTIME-complete.
\end{corollary}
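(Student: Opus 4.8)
The plan is to establish the two directions of NEXPTIME-completeness separately, obtaining hardness essentially for free from the two preceding theorems and concentrating the real work on membership.

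For NEXPTIME-hardness, I would simply chain the results already in hand. The preceding theorem states that $SF_\mathcal{D}$ is NEXPTIME-complete, hence in particular NEXPTIME-hard, and the reduction theorem states that every instance of Definition~\ref{def:synthesis-frag} interpreted over $\mathcal{D}$ is polynomial-time reducible to an instance of Definition~\ref{def:finite-synth-formula}. Composing a hardness reduction from an arbitrary NEXPTIME language into $SF_\mathcal{D}$ with this polynomial-time reduction into finite synthesis yields, by transitivity of polynomial-time reductions, a polynomial-time reduction from every NEXPTIME language into finite synthesis. Thus finite synthesis inherits NEXPTIME-hardness with no new construction.

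For membership in NEXPTIME, I would exhibit a nondeterministic exponential-time decision procedure for Definition~\ref{def:finite-synth-formula}. The key preliminary step is to bound the size of a witnessing program: since the inputs $x$ are drawn from the finite domain $\mathcal{D}$, any second-order witness is a function with finite domain and co-domain, and by Theorem~\ref{thm:l-universal} every such function is computed by some program in our language. As there are only finitely many such functions, there is a uniform bound---exponential in the input size---on the length of a shortest program computing any one of them, so if the formula is satisfiable it admits a witness of at most exponential length. The procedure then (i)~nondeterministically guesses such a program $P$, which takes exponential time to write down, and (ii)~verifies $\forall x \in \mathcal{D}.\,\sigma(P,x)$ by iterating over all (exponentially many) elements of $\mathcal{D}$ and, for each, evaluating $\mathtt{exec}(P,x)$ and checking $\sigma$. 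Because $P$ is loop-free and non-recursive, each evaluation is polynomial in $|P|$ and hence exponential in the input; the product of exponentially many exponential-time checks remains exponential, so the whole verification runs in deterministic exponential time. This places finite synthesis in NEXPTIME, and together with hardness gives completeness.

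The main obstacle is this membership argument, specifically justifying the exponential bound on witness size and confirming that the verification phase stays within an exponential time budget; hardness is immediate by composition. One alternative route to membership that I would keep in mind is to observe that Definition~\ref{def:finite-synth-formula} is syntactically a special case of Definition~\ref{def:synthesis-frag}---a single second-order existential, a single block of first-order universals, and a quantifier-free matrix---so that membership could instead be read off directly from the NEXPTIME upper bound already established for $SF_\mathcal{D}$, provided one checks that $\mathtt{exec}$ is expressible within the quantifier-free matrix $\sigma$.
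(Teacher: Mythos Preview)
Your proposal is correct. The paper itself gives no explicit proof of this corollary: it is stated immediately after the reduction theorem and is meant to follow at once from the combination of ``$SF_\mathcal{D}$ is NEXPTIME-complete'' and ``$SF_\mathcal{D}$ is polynomial-time reducible to finite synthesis''. Your hardness argument---compose the NEXPTIME-hardness of $SF_\mathcal{D}$ with the polynomial-time reduction into Definition~\ref{def:finite-synth-formula}---is exactly the intended one.

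For membership, the paper's implicit reasoning is almost certainly the one-line observation you list as your \emph{alternative}: an instance of Definition~\ref{def:finite-synth-formula} already has the shape $\exists P.\,\forall x.\,\sigma(P,x)$ required by Definition~\ref{def:synthesis-frag}, so the NEXPTIME upper bound for $SF_\mathcal{D}$ applies directly. Your primary route---guess an exponentially bounded program via Theorems~\ref{thm:l-universal} and~\ref{thm:l-concise}, then deterministically verify by iterating over $\mathcal{D}$---is a more self-contained argument that avoids appealing to Fagin's theorem a second time. It buys you an explicit algorithm and makes the witness-size bound concrete, at the cost of a paragraph of bookkeeping that the paper simply elides. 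Either route suffices; just be aware that the paper treats the result as immediate and does not spell out the membership direction at all.
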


We are now in a position to sketch the design of a decision procedure for $SF_\mathcal{D}$: we will convert
the $SF_\mathcal{D}$ satisfiability problem to an equisatisfiable finite synthesis problem which we will then solve
with a finite state program synthesiser.  This design will be elaborated in Section~\ref{sec:kalashnikov-implementation}.

\section{Deciding $SF_\mathcal{D}$ via Finite-State Program Synthesis}
\label{sec:kalashnikov-implementation}


In this section we will present a sound and complete algorithm for
finite-state synthesis that we use to decide the satisfiability of formulae in $SF_\mathcal{D}$.
We begin by describing a general purpose synthesis procedure (Section~\ref{sec:kalashnikov-abstract-synth}),
then detail how this general purpose procedure is instantiated for
synthesising finite-state programs.  
We then describe the algorithm
we use to search the space of possible programs (Sections~\ref{sec:kalashnikov-candidates}, \ref{sec:kalashnikov-param-space} and~\ref{sec:kalashnikov-search-space}).

\subsection{General Purpose Synthesis Algorithm}
\label{sec:kalashnikov-abstract-synth}

\begin{algorithm}
 \caption{Abstract refinement algorithm
 \label{alg:cegis}}
 \begin{multicols}{2}
 \begin{algorithmic}[1]
\Statex
\Function{synth}{inputs}
  \Let{$(i_1, \ldots, i_N)$}{inputs}
  \Let{query}{$\exists P . \sigma(i_1, P) \land \ldots \land \sigma(i_N, P)$}
  \Let{result}{decide(query)}
  \If{result.satisfiable}
    \State \Return{result.model}
  \Else
    \State \Return{UNSAT}
  \EndIf
\EndFunction
\Statex
\Function{verif}{P}
  \Let{query}{$\exists x . \lnot \sigma(x, P)$}
  \Let{result}{decide(query)}
  \If{result.satisfiable}
    \State \Return{result.model}
  \Else
    \State \Return{valid}
  \EndIf
\EndFunction
\columnbreak
\Statex
\Function{refinement loop}{}
  \Let{inputs}{$\emptyset$}
  \Loop
    \Let{candidate}{\Call{synth}{inputs}}
    \If{candidate = UNSAT}
      \State \Return{UNSAT}
    \EndIf
    \Let{res}{\Call{verif}{candidate}}
    \If{res = valid}
      \State \Return{candidate}
    \Else
      \Let{inputs}{inputs $\cup$ res}
    \EndIf
  \EndLoop
\EndFunction
 \end{algorithmic}
 \end{multicols}
\end{algorithm}

We use Counterexample Guided Inductive Synthesis
(CEGIS)~\cite{sketch,toast} to find a
program satisfying our specification.  
Algorithm~\ref{alg:cegis} is divided into two procedures: {\sc synth} 
and {\sc verif}, which interact via a finite set of test vectors {\sc inputs}.
The {\sc synth} procedure tries to find an existential witness $P$
that satisfies the partial specification:
$\exists P . \forall x \in \text{\sc inputs} . \sigma(x, P)$

If {\sc synth} succeeds in finding a witness $P$, this witness is a
candidate solution to the full synthesis formula.  We pass this candidate
solution to {\sc verif} which determines whether it does satisfy
the specification on all inputs by checking satisfiability of the
verification formula:
$\exists x . \lnot \sigma(x, P)$

If this formula is unsatisfiable, the candidate solution is in fact a
solution to the synthesis formula and so the algorithm terminates. 
Otherwise, the witness $x$ is an input on which the candidate solution fails
to meet the specification.  This witness $x$ is added to the {\sc inputs}
set and the loop iterates again.  It is worth noting that each iteration of the
loop adds a new input to the set of inputs being used for synthesis.  If
the full set of inputs is finite, this means that the refinement loop
can only iterate a finite number of times.

\subsection{Finite-State Synthesis}
\label{sec:kalashnikov-concrete-algorithm}

We will now show how the generic construction of Section~\ref{sec:kalashnikov-abstract-synth}
can be instantiated to produce a finite-state program synthesiser.
A natural choice for such a synthesiser would be to
work in the logic of quantifier-free propositional formulae and to use a
propositional SAT or SMT-$\mathcal{BV}$ solver as the decision procedure. 
However we propose a slightly different tack, which is to use a decidable
fragment of C as a ``high level'' logic.  We call this fragment \newC.


The characteristic property of a
\newC  program is that safety can be decided for
it using a single query to a Bounded Model Checker.  A \newC program is
just a C program with the following syntactic restrictions:\\
(i) all loops in the program must have a constant bound;\\
(ii) all recursion in the program must be limited to a constant depth;\\
(iii) all arrays must be statically allocated (i.e. not using \texttt{malloc}),
 and be of constant size.\\
\newC programs may use nondeterministic values, assumptions
and arbitrary-width types.

Since each loop is bounded by a constant, and each recursive function call is
limited to a constant depth, a \newC program necessarily terminates and in
fact does so in $O(1)$ time.  If we call the largest loop bound~$k$, then
a Bounded Model Checker with an unrolling bound of $k$ will be a complete
decision procedure for the safety of the program.  For a \newC program of
size $l$ and with largest loop bound~$k$, a Bounded Model Checker will
create a SAT problem of size $O(lk)$.  Conversely, a SAT problem
of size $s$ can be converted trivially into a loop-free \newC program
of size $O(s)$.  The safety problem for \newC is therefore NP-complete,
which means it can be decided fairly efficiently for many practical
instances.

\subsection{Candidate Generation Strategies}
\label{sec:kalashnikov-candidates}
 A candidate solution $P$ is written in a simple RISC-like language $\mathcal{L}$, whose syntax is given in Fig.~\ref{fig:l-language} in Appendix~\ref{app:cegis}.  
 We supply an interpreter for $\mathcal{L}$ which is written in \newC.  
The specification function $\sigma$ will include calls to this interpreter, by which means it
will examine the behaviour of a candidate $\mathcal{L}$ program.

For the {\sc synth} portion of the CEGIS loop, we construct a \newC program {\sc synth.c}
which takes as parameters a candidate program $P$ and test
inputs. 
The program contains an assertion which fails
iff $P$ meets the specification for each of the inputs. 
Finding a new candidate program is then equivalent to checking
the safety of {\sc synth.c} for which we use the strategies described in the next section.
There are many possible strategies for finding these candidates; we employ the
following strategies in parallel:

{\it (i) Explicit Proof Search.} The simplest strategy for finding candidates
is to just exhaustively enumerate them all, starting with the shortest and
progressively increasing the number of instructions.  

{\it (ii) Symbolic Bounded Model Checking.} Another complete method for generating
candidates is to simply use BMC on the {\sc synth.c} program.  

{\it (iii) Genetic Programming and Incremental Evolution.} \label{sec:gp}
Our final strategy is genetic programming
(GP)~\cite{langdon:fogp,brameier2007linear}.  GP provides an adaptive way of
searching through the space of $\mathcal{L}$-programs for an individual
that is ``fit'' in some sense.  We measure the
fitness of an individual by counting the number of tests in {\sc inputs}
for which it satisfies the specification.

To bootstrap GP in the first iteration of the CEGIS loop, we generate a population
of random $\mathcal{L}$-programs. We then iteratively evolve this population by
applying the genetic operators {\sc crossover} and {\sc mutate}.
{\sc Crossover} combines selected existing programs into new programs,
whereas {\sc mutate} randomly changes parts of a single program.
Fitter programs are more likely to be selected.



Rather than generating a random population at the beginning of each subsequent
iteration of the CEGIS loop, we start with the population we had at the end of the
previous iteration.  The intuition here is that this population contained
many individuals that performed well on the $k$ inputs we had before, so
they will probably continue to perform well on the $k+1$ inputs we have now.
In the parlance of evolutionary programming, this is known as
incremental evolution~\cite{Gomez97incrementalevolution}.

\section{Searching the Space of Possible Solutions}
\label{sec:kalashnikov-param-space}

An important aspect of our synthesis algorithm is the manner in which we search the space of candidate programs. 
The key component is parametrising
the language~$\mathcal{L}$, which induces a lattice of progressively
more expressive languages.  We start by attempting to synthesise
a program at the lowest point on this lattice and increase the
parameters of~$\mathcal{L}$ until we reach a point at which
the synthesis succeeds.

As well as giving us an automatic search procedure, this parametrisation
greatly increases the efficiency of our system since languages
low down the lattice are very easy to decide safety for.  If a program
can be synthesised in a low-complexity language, the whole procedure
finishes much faster than if synthesis had been attempted in a
high-complexity language.


\subsection{Parameters of language $\mathcal{L}$}


%
{\it Program Length: $l$.}
The first parameter we introduce is program length, denoted by $l$.
At each iteration we synthesise programs of length exactly $l$.
We start with $l = 1$ and increment $l$ whenever we determine
that no program of length $l$ can satisfy the specification.  When we do
successfully synthesise a program, we are \emph{guaranteed that it
is of minimal length} since we have previously established that no
shorter program is correct.\\
{\it Word Width: $w$.}
An $\mathcal{L}$-program runs on a virtual machine (the $\mathcal{L}$-machine) that is parametrised 
by the \emph{word width},  
that is, the number of bits
in each internal register and immediate constant. \\ 
{\it Number of Constants: $c$.}
Instructions in $\mathcal{L}$ take up to three operands.
Since any instruction whose operands are all constants can always be
eliminated (since its result is a constant), we know that a loop-free program
of minimal length will not contain any instructions with two constant
operands.  Therefore the number of constants that can appear in
a minimal program of length $l$ is at most $l$.  By minimising the number
of constants appearing in a program, we are able to use a particularly
efficient program encoding that speeds up the synthesis procedure
substantially.  

\subsection{Searching the Program Space}
\label{sec:kalashnikov-search-space}

The key to our automation approach is to come up with a sensible way in which to
adjust the $\mathcal{L}$-parameters in order to cover all possible programs.
Two important components in this search are the adjustment of parameters and the generalisation of candidate solutions.
We discuss them both next.

\paragraph{Adjusting the search parameters.}
After each round of {\sc synth}, we may need to adjust the parameters.  The
logic for these adjustments is shown as a tree in Fig.~\ref{fig:paramsflow}.

Whenever {\sc synth} fails, we consider which parameter might have caused the
failure.  There are two possibilities: either the program length $l$ was too small,
or the number of allowed constants $c$ was.  If $c < l$, we just increment $c$ and
try another round of synthesis, but allowing ourselves an extra program constant.
If $c = l$, there is no point in increasing $c$ any further.  This is because
no minimal $\mathcal{L}$-program has $c > l$, for if it did there would
have to be at least one instruction with two constant operands.  This
instruction could be removed (at the expense of adding its result as
a constant), contradicting the assumed minimality of the program.  So
if $c = l$, we set $c$ to 0 and increment $l$, before attempting
synthesis again.

If {\sc synth} succeeds but {\sc verif} fails, we have a candidate
program that is correct for some inputs but incorrect on at least
one input.  However, it may be the case that the candidate program
is correct for \emph{all} inputs when run on an $\mathcal{L}$-machine
with a small word size. Thus, we try to generalise the solution to a bigger word size, as explained in the next paragraph. 
If the generalisation
is able to find a correct program, we are done.  Otherwise,
we need to increase the word width of the $\mathcal{L}$-machine
we are currently synthesising for.


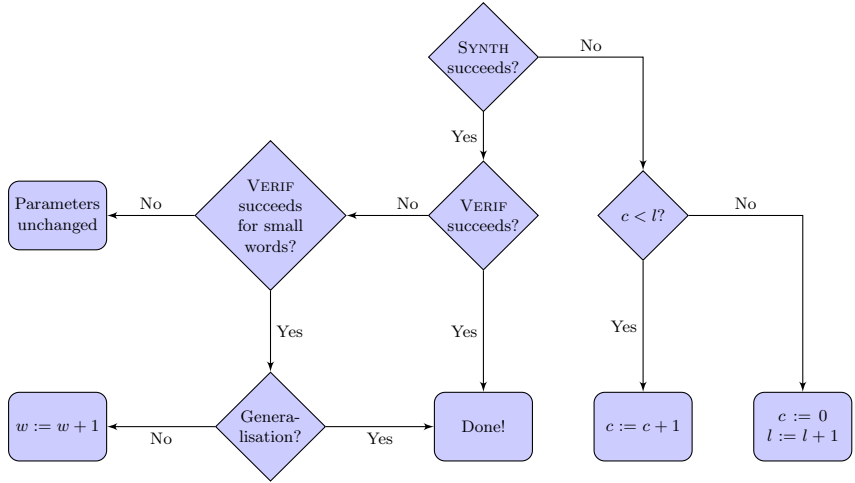
\begin{figure}[t]
\centering
\begin{tikzpicture}[scale=0.7, transform shape, node distance=3cm, auto]
\tikzstyle{decision} = [diamond, draw, fill=blue!20, 
    text width=4.5em, text badly centered, node distance=3cm, inner sep=0pt]
\tikzstyle{block} = [rectangle, draw, fill=blue!20, 
    text width=5em, text centered, rounded corners, minimum height=4em]
\tikzstyle{line} = [draw, -latex']
\tikzstyle{cloud} = [draw, ellipse,fill=red!20, node distance=3cm,
    minimum height=2em]

 \node [decision] (synthsucc) {{\sc Synth} succeeds?};

 \node [decision, below of=synthsucc, node distance=3cm] (verif) {{\sc Verif} succeeds?};
 \node [decision, right of=verif] (ck) {$c < l$?};

 \node [block, below of=verif, node distance=4cm]   (done) {Done!};
 \node [decision, left of=verif, node distance=4cm] (verifw) {{\sc Verif} succeeds for small words?};

 \node [block, below of=ck, node distance=4cm] (incc) {$c := c+1$};
 \node [block, right of=incc, node distance=3cm] (incl) {$c := 0$\\ $l := l+1$};

 \node [decision, below of=verifw, node distance=4cm] (gen) {Genera-\\lisation?};
 \node [block, left of=verifw, node distance=4cm] (iterate) {Parameters unchanged};

 \node [block, left of=gen, node distance=4cm] (incw) {$w := w+1$};

 \path [line] (synthsucc) -- node [left] {Yes} (verif);
 \path [line] (synthsucc) -| node [above, near start] {No} (ck);

 \path [line] (verif) -- node [left] {Yes} (done);
 \path [line] (verif) -- node [above, near start] {No} (verifw);

 \path [line] (ck) -- node [left] {Yes} (incc);
 \path [line] (ck) -| node  [above, near start]  {No} (incl);

 \path [line] (verifw) -- node [right] {Yes} (gen);
 \path [line] (verifw) -- node [above] {No} (iterate);
 
 \path [line] (gen) -- node [below] {Yes} (done);
 \path [line] (gen) -- node [below] {No} (incw);

\end{tikzpicture}

 \caption{Decision tree for increasing parameters of $\mathcal{L}$.}
 \label{fig:paramsflow}

\end{figure}

\paragraph{Generalisation of candidate solutions.}
It is often the case that a program which satisfies the specification
on an $\mathcal{L}$-machine with $w = k$ will continue to satisfy the
specification when run on a machine with $w > k$.  For example, the program
in Fig.~\ref{fig:bitvector-program} isolates the least-significant bit of a word.
This is true irrespective of the word size of the machine it is run on -- it will
isolate the least-significant bit of an 8-bit word just as well as it will a
32-bit word.  An often successful strategy is to synthesise a program for an
$\mathcal{L}$-machine with a small word size and then to check whether the
same program is correct when run on an $\mathcal{L}$-machine with a
full-sized word.

The only wrinkle here is that we will sometimes synthesise a program containing
constants.  If we have synthesised a program with $w=k$,
the constants in the program will be $k$-bits wide.  To extend the program
to an $n$-bit machine (with $n > k$), we need some way of deriving $n$-bit-wide
numbers from $k$-bit ones.  We have several strategies for this and
just try each in turn.  Our strategies are shown in Fig.~\ref{fig:generalize}.
$\mathcal{BV}(v, n)$ denotes an $n$-bit wide bitvector holding the value $v$
and $b \cdotp c$ means the concatenation of bitvectors $b$ and $c$.  For
example, the first rule says that if we have the 8-bit number with value 8,
and we want to extend it to some 32-bit number, we'd try the 32-bit number
with value 32.  These six rules are all heuristics that we have found to be
fairly effective in practice.

\begin{figure}
\centering
\begin{minipage}{0.45\linewidth}
 \begin{lstlisting}[language=C]
int isolate_lsb(int x) {
  return x & -x;
}
 \end{lstlisting}
\end{minipage}
\begin{minipage}{0.45\linewidth}

Example:

\hrule

\begin{tabular}{llcccccccc}
 x       & = & 1 & 0 & 1 & 1 & 1 & 0 & 1 & 0 \\
 -x      & = & 0 & 1 & 0 & 0 & 0 & 1 & 1 & 0 \\
 x \& -x & = & 0 & 0 & 0 & 0 & 0 & 0 & 1 & 0
\end{tabular}
\end{minipage}

 \caption{A tricky bitvector program}
  \label{fig:bitvector-program}
\end{figure}

\begin{figure}
\centering
\begin{minipage}[t]{.45\textwidth}
\begin{eqnarray*}
 \bv{m}{m} & \rightarrow & \bv{n}{n} \\
 \bv{m-1}{m} & \rightarrow & \bv{n-1}{n} \\
 \bv{m+1}{m} & \rightarrow & \bv{n+1}{n}
\end{eqnarray*}
\end{minipage}
\begin{minipage}[t]{.45\textwidth}
\begin{eqnarray*}
 \bv{x}{m} & \rightarrow & \bv{x}{n} \\
 \bv{x}{m} & \rightarrow & \bv{x}{m} \cdotp \bv{0}{n - m} \\
 \bv{x}{m} & \rightarrow & \underbrace{\bv{x}{m} \cdotp \ldots \cdotp \bv{x}{m}}_{\frac{n}{m} \mathrm{ times}}
\end{eqnarray*}
\end{minipage}
\caption{Rules for extending an $m$-bit wide number to an $n$-bit wide one.
 \label{fig:generalize}}
\end{figure}

\subsection{Stopping Condition for Unsatisfiable Specifications}
\label{sec:stopping-condition}
If a specification is unsatisfiable, we would still like our algorithm to terminate
with an ``unsatisfiable'' verdict.  To do this, we can observe that any total function
taking $n$ bits of input is computed by some program of at most $2^n$ instructions
(a consequence of Theorems~\ref{thm:l-universal} and~\ref{thm:l-concise}).
Therefore every satisfiable specification has a solution with at most $2^n$ instructions.
This means that if we ever need to increase the length of the candidate program we
search for beyond $2^n$, we can terminate, safe in the knowledge that the
specification is unsatisfiable.




\section{Soundness and Completeness}
\label{sec:kalashnikov-proofs}
We will now state soundness and completeness results for the $SF_\mathcal{D}$ solver.
Proofs for each of these theorems can be found in Appendix~\ref{app:proofs}.

\begin{theorem}\label{thm:synth-sound}
Alg~\ref{alg:cegis} is sound -- if it terminates with witness $P$, then
$P \models \sigma$.
\end{theorem}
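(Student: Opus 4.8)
The plan is to argue by direct inspection of Algorithm~\ref{alg:cegis}, reducing soundness to the soundness of the underlying oracle \textsc{decide}. First I would isolate the single exit point at which the algorithm returns a program rather than \textsc{unsat}: in the refinement loop this is the branch \textbf{return candidate}, which is guarded by the test \textsc{res} $=$ \emph{valid}. Hence, if the algorithm terminates with a witness $P$, then $P$ is the value of \textsc{candidate} for which the immediately preceding call $\textsc{verif}(P)$ returned \emph{valid}. No other return path yields a program, so this case analysis is exhaustive.

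Next I would unfold \textsc{verif}. By construction, $\textsc{verif}(P)$ returns \emph{valid} only in its \textsc{else} branch, i.e.\ only when $\textsc{decide}$ reports the verification query $\exists x.\,\lnot\sigma(x,P)$ to be unsatisfiable. Assuming $\textsc{decide}$ is sound — that whenever it reports unsatisfiability the formula genuinely has no model — we conclude that $\exists x.\,\lnot\sigma(x,P)$ is unsatisfiable, i.e.\ there is no input $x \in \mathcal{D}$ with $\lnot\sigma(x,P)$. By elementary negation this is equivalent to $\forall x \in \mathcal{D}.\,\sigma(P,x)$, which is exactly the assertion $P \models \sigma$ in the sense of Definition~\ref{def:finite-synth-formula}. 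This closes the argument.

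I expect the proof to be short, as it is a partial-correctness argument: it asserts nothing about termination (that is the concern of the complementary completeness result) and only characterises the value returned \emph{when} the loop halts via \textbf{return candidate}. The one genuinely load-bearing hypothesis, and thus the only real obstacle, is the soundness of \textsc{decide} on the verification query; everything else is a matter of reading off the guards of the conditionals. I would therefore state this assumption explicitly at the outset, noting that in our instantiation \textsc{decide} is realised by a Bounded Model Checker which, as argued in Section~\ref{sec:kalashnikov-concrete-algorithm}, is a \emph{complete} decision procedure for the safety of the \newC verification harness with the chosen unrolling bound. Its \emph{unsatisfiable} verdicts are therefore trustworthy, which is precisely what the reduction above requires.
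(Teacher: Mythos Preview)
Your proposal is correct and follows essentially the same route as the paper's proof: identify that the refinement loop returns a witness only when \textsc{verif} yields \emph{valid}, observe that this means $\exists x.\,\lnot\sigma(x,P)$ is unsatisfiable, and conclude $\forall x.\,\sigma(P,x)$. You are simply more explicit than the paper in isolating the exit point and in naming the soundness of \textsc{decide} as the load-bearing assumption, which the paper leaves implicit.
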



\begin{theorem}
 \label{thm:finite-synth-complete}
 Alg~\ref{alg:cegis} with the stopping condition described in Section~\ref{sec:stopping-condition}
 is complete when instantiated with \newC as a background theory -- it will terminate for all specifications $\sigma$.
\end{theorem}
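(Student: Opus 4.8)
The plan is to prove termination by decomposing the algorithm into two nested loops and bounding each. The external facts I will use are the decidability of \newC safety via a single Bounded Model Checking query (Section~\ref{sec:kalashnikov-concrete-algorithm}), the universality and conciseness bounds (Theorems~\ref{thm:l-universal} and~\ref{thm:l-concise}), and the stopping condition of Section~\ref{sec:stopping-condition}, which caps the candidate length at $2^n$ for $n$-bit inputs. First I would show that every call to {\sc synth} and {\sc verif} terminates. Each procedure consists of a single invocation of \texttt{decide} on a \newC safety query; since \newC safety is decided by one BMC query with a constant unrolling bound, each such call returns in finite time. Moreover, {\sc synth} only ever searches the finite space of $\mathcal{L}$-programs whose length is at most $2^n$, with a bounded number of constants ($c \le l$) and word width bounded by that of $\mathcal{D}$. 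This space is finite, so {\sc synth} returns either a candidate or UNSAT after finitely many queries, and both procedures are total.

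Next I would bound the outer refinement loop. I claim that every iteration which does not return adds a \emph{fresh} input to the set {\sc inputs}. Indeed, the candidate $P$ produced by {\sc synth} satisfies $\sigma(i, P)$ for every $i$ already in {\sc inputs}, whereas the witness $x$ returned by {\sc verif} satisfies $\lnot \sigma(x, P)$; hence $x \notin \text{\sc inputs}$. Since all inputs are drawn from the finite domain $\mathcal{D}$, at most $|\mathcal{D}|$ fresh inputs can ever be added, so the loop executes at most $|\mathcal{D}| + 1$ times. Combined with the previous step, the algorithm performs only finitely many operations and therefore halts.

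Finally I would confirm that the verdict is correct, which simultaneously rules out any spurious non-termination. By Theorems~\ref{thm:l-universal} and~\ref{thm:l-concise}, every total $n$-bit function is computed by some program of length at most $2^n$, so any satisfiable finite-synthesis instance, and in particular the partial instance induced by any subset of inputs, has a solution inside the searched space. Consequently, if $\sigma$ is satisfiable then {\sc synth} never returns UNSAT, and by the freshness argument the loop must eventually produce a candidate that {\sc verif} accepts, which is genuinely correct by soundness (Theorem~\ref{thm:synth-sound}). If $\sigma$ is unsatisfiable, then so is every partial instance, since a partial specification is weaker than the full one; hence whenever {\sc synth} exhausts the length-$2^n$ search space it correctly reports UNSAT, in the worst case once {\sc inputs} has grown to all of $\mathcal{D}$.

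The main obstacle is justifying the $2^n$ cutoff in a way that makes the inner search \emph{complete} rather than merely terminating: I must argue that ``no correct program of length $\le 2^n$ on the current inputs'' soundly entails unsatisfiability of the whole specification. This step chains the universality/conciseness length bound with the monotonicity observation that a partial specification is implied by the full one, and relies on the counterexample-freshness property to tie together the inner (program-length) and outer (input-refinement) termination measures. Once these two ingredients are combined, termination with the correct answer follows from the two finiteness bounds above.
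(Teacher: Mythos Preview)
Your argument follows the same line as the paper's: termination of each {\sc synth}/{\sc verif} call via decidability of \newC, a finite bound on the outer loop from freshness of counterexamples, and a case split on whether $\sigma$ is satisfiable. The paper's proof is terser---it simply invokes the semi-completeness lemma for the satisfiable case and the stopping condition for the unsatisfiable one---but your explicit freshness argument is exactly what underlies that lemma, so the approaches coincide.

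One implication in your correctness discussion is stated backwards. You write ``If $\sigma$ is unsatisfiable, then so is every partial instance, since a partial specification is weaker than the full one.'' That does not follow: a weaker formula can have models when the stronger one has none (with {\sc inputs} $= \emptyset$ the partial specification is vacuously satisfiable regardless of $\sigma$). What you actually need---and what you correctly identify in your final paragraph---is the other direction: if the \emph{partial} instance has no solution of length $\leq 2^n$, then neither does the full one, because the full specification is stronger and the $2^n$ bound already covers every function. That is what makes an early UNSAT verdict from {\sc synth} sound. Your termination argument is unaffected, since freshness alone bounds the outer loop and the stopping condition bounds the inner search; just flip the implication when you justify correctness of the UNSAT verdict.
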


 Since safety of \newC programs is decidable, Algorithm~\ref{alg:cegis} is sound and complete
 when instantiated with \newC as a background theory and using the stopping condition of
 Section~\ref{sec:stopping-condition}.  This construction therefore gives as a decision procedure
 for $SF_\mathcal{D}$.

\paragraph{Runtime as a Function of Solution Size.}
We note that the runtime of our solver is heavily influenced by the length of the shortest
program satisfying the specification, since we begin searching for short programs.
%
%
We will now show that the number of iterations of the CEGIS loop
is a function of the Kolmogorov complexity of the synthesised program.
Let us first recall the definition of the Kolmogorov complexity
of a function $f$:

\begin{definition}[Kolmogorov complexity]
 The Kolmogorov complexity $K(f)$ is the length of the shortest program that 
 computes~$f$.
\end{definition}

We can extend this definition slightly to talk about the Kolmogorov complexity of a
synthesis problem in terms of its specification:

\begin{definition}[Kolmogorov complexity of a synthesis problem]
 The Kolmogorov complexity of a program specification $K(\sigma)$ is the length of the shortest
 program $P$ such that P is a witness to the satisfiability of $\sigma$.
\end{definition}

Let us consider the number of iterations of the CEGIS loop $n$ required for a specification
$\sigma$.  Since we enumerate candidate programs in order of length, we are always synthesising
programs with length no greater than $K(\sigma)$ (since when we enumerate the first correct program,
we will terminate).  So the space of solutions we search over is the space
of functions computed by $\mathcal{L}$-programs of length no greater than $K(\sigma)$.  Let's
denote this set $\mathcal{L}(K(\sigma))$.
Since there are $O(2^{K(\sigma)})$ \emph{programs} of length $K(\sigma)$ and some functions
will be computed by more than one program, we have $| \mathcal{L}(K(\sigma)) | \leq O(2^{K(\sigma)})$.

Each iteration of the CEGIS loop distinguishes at least one incorrect function from the set of correct
functions, so the loop will iterate no more than $| \mathcal{L}(K(\sigma)) |$ times.
Therefore another bound on our runtime is
$NTIME\left(2^{K(\sigma)} \right)$.

\section{Experiments}
\label{sec:kalashnikov-experiments}

We implemented our decision procedure for $SF_\mathcal{D}$ as 
the \kalashnikov tool.
We used \kalashnikov to solve formulae generated from a variety of problems
taken from superoptimisation, code deobfuscation, floating point verification,
ranking function and recurrent set synthesis, safety proving, and bug finding.
The superoptimisation and code deobfuscation benchmarks were taken from the experiments
of~\cite{brahma}; the termination benchmarks were taken from SVCOMP'15~\cite{svcomp15} and
they include the experiments of~\cite{synth-termination}; the safety benchmarks are taken from the
experiments of~\cite{danger-invariants}.

We would like to stress that these experiments serve to evaluate the potential of using
our program synthesiser as a backend for many program analysis tasks and are not intended
to compare performance with specialised solvers for each of these tasks.

We ran our experiments on a 4-core, 3.30\,GHz Core i5 with 8\,GB of RAM.  Each benchmark
was run with a timeout of 180\,s.  The results are shown in Table~\ref{tbl:results}. For each category of benchmarks, we report the total
number of benchmarks in that category, the number we were able to solve within the time limit,
the average solution size (in instructions), the average number of iterations of the CEGIS loop, the average time and total
time taken. The deobfuscation and floating point benchmarks are considered together with the superoptimisation ones.  

It should be understood that
in contrast to less expressive logics that might be invoked several times in the analysis of some
problem, each of these benchmarks is a ``complete'' problem from
the given problem domain.  For example, each of the benchmarks in the termination category
requires \kalashnikov to prove that a full program terminates, i.e.~it must find a ranking function
and supporting invariants, then prove that these constitute a valid termination proof for the program
being analysed.

\paragraph{Discussion of the experimental results.}
The timings show that for the instances where we can find a satisfying assignment,
we tend to do so quite quickly (on the order of a few seconds).  Furthermore the
programs we synthesise are often short, even when the problem domain is very complex, such as for
liveness and safety.

Not all of these benchmarks are satisfiable, and in particular around half of the termination
benchmarks correspond to attempted proofs that non-terminating programs terminate and vice versa.
This illustrates one of the current shortcomings of \kalashnikov as a decision procedure:
we can only conclude that a formula is unsatisfiable once we have examined candidate solutions
up to a very high length bound.  Being able to detect unsatisfiability of a $SF_\mathcal{D}$
formula earlier than this would be extremely valuable.  We note that for some formulae we can
simultaneously search for a proof of satisfiability and of unsatisfiability.  For example,
for some program $P$, we can construct a formula $\phi_T$ that is satisfiable iff $P$ terminates
on all inputs, and another formula $\phi_N$ that is satisfiable iff $P$ does not terminate for some
input~\cite{synth-termination}.  Therefore $\phi_T \vee \phi_N$ is guaranteed to be satisfiable, so we can
synthesise a program proving $\phi_T \vee \phi_N$ and then check which of $\phi_T$ and $\phi_N$ it
proves.  This avoids the bad case where we try to synthesise a solution for an unsatisfiable
specification.

\begin{table}[h]
\resizebox{\textwidth}{!}{
 \begin{tabular}{|l||c|c|c|c|c|c|c|}
 \hline
   Category & \#Benchmarks & \#Solved & Avg. solution size & Avg. iterations & Avg. time (s) & Total time (s) \\
   \hline
   \hline
   Superoptimisation & 29 & 22 & 4.1 & 2.7 & 7.9 & 166.1 \\
   Termination & 47 & 35 & 5.7 & 14.4 & 11.2 & 392.9 \\
   Safety & 20 & 18 & 8.3 & 7.1 & 11.3 & 203.9 \\
   \hline
   \hline
   Total & 96 & 75 & 5.9 & 9.2 & 10.3 & 762.9 \\
   \hline
 \end{tabular}
}
 \caption{Experimental results.\label{tbl:results}}
\end{table}

To help understand the role of the different solvers involved in the synthesis process, we provide
a breakdown of how often each solver ``won'', i.e.~was the first to return an answer.
This breakdown is shown in Table~\ref{tbl:wins}.
We see that GP and explicit account for the great majority of the responses, with the load spread
fairly evenly between them.  This distribution illustrates the different strengths of each solver:
GP is very good at generating candidates, explicit is very good at finding counterexamples
and {\sc CBMC} is very good at proving that candidates are correct.  The GP and explicit numbers are
similar because they are approximately ``number of candidates found'' and ``number of  candidates refuted''
respectively.  The {\sc CBMC} column is approximately ``number of candidates proved correct''.
The spread of winners here shows that each of the search strategies is contributing something to the
overall search and that the strategies are able to co-operate with each other.

\begin{table}
\centering
\subfloat[a][How often each solver ``wins''.]{
\begin{tabular}{|c|c|c|c|}
\hline
 {\sc CBMC} & Explicit & GP & Total \\
 \hline
 1122 & 2499 & 1654 & 5405 \\
 21\% & 46\% & 31\% & 100\% \\
 \hline
\end{tabular}
\label{tbl:wins}
}
%
\qquad\qquad
\subfloat[b][Where the time is spent.]{
\begin{tabular}{|c|c|c|c|}
\hline
{\sc synth} & {\sc verif} & {\sc generalize} & Total \\
\hline
6937\,s & 1114\,s & 559\,s & 8052\,s \\
86\% & 14\% & 7\% & 100\% \\
\hline
\end{tabular}
\label{tbl:time}
}
%
%
\caption{Statistics about the experimental results.}
\end{table}

\begin{table}
\centering
 \begin{tabular}{|l||c|c|c|c|c|c|}
 \hline
   & \#Benchmarks & \#Solved & \#TO & \#Crashes & Avg. time (s) & Spec. size \\
   \hline
  \kalashnikov & 20 & 18 & 2 & 0 & 11.3 & 341 \\
  {\sc eSolver} & 20 & 7 & 5 & 8 & 13.6 & 3140 \\
  \hline
 \end{tabular}

 \caption{Comparison of \kalashnikov and {\sc eSolver} on the safety benchmarks.\label{tbl:sygus}}
\end{table}

To help understand where the time is spent in our solver, Table~\ref{tbl:time} shows how much time is
spent in {\sc synth}, {\sc verif} and constant generalization.  Note that
generalization counts towards {\sc verif}'s time.  We can see that synthesising candidates takes
much longer than verifying them, which suggests that improved procedures for candidate synthesis
will lead to good overall performance improvements.  However, the times shown in this table
include all the runs that timed out, as well as those that succeeded.  We have observed that
runs which time out spend more time in synthesis than runs which succeed, so the distribution here
is biased by the cost of timeouts.

\subsection{Comparison to SyGuS}

In order to compare \kalashnikov to another synthesis based decision procedure, we translated the 20 safety
benchmarks into the SyGuS format~\cite{sygus} and ran the best available SyGuS solver ({\sc eSolver}, taken
from the SyGuS Github repository on 5/7/2015))
on these benchmarks.  We ran {\sc eSolver} on the same machine used for the previous experiments.
The results of these experiments are shown in Table~\ref{tbl:sygus}, which contains the total number
of benchmarks, the number of benchmarks solved correctly, the number of timeouts, the number of crashes
(exceptions thrown by the solver), the mean time to successfully solve and the total number of lines
in the 20 specifications.

Our comparison only uses 20 of the 127 benchmarks because it was difficult for us to convert from our
specification format (a subset of C) into the SyGuS format.  In particular, there are no tools for converting
C programs to SyGuS specifcations and we found it quite difficult to write such a tool so we had to do the conversion by hand.
Since the {\sc eSolver} tool crashed on many of the instances we tried, we reran the experiments on the StarExec platform~\cite{starexec}
to check that we had not made mistakes setting up our environment, however the same instances also
caused exceptions on StarExec.  A technical limitation of SyGuS's output format compared to \kalashnikov's meant that
while \kalashnikov can express lexicographic ranking functions of unbounded dimension, SyGuS cannot.  The benchmarks
we converted to SyGuS involved searching for ranking functions, which meant that we could only use
benchmarks that did not require lexicographic ranking functions.

Overall, we can see that \kalashnikov performs significantly better than the best SyGuS solver on these benchmarks,
which validates our claim that \kalashnikov is a good backend for program analysis problems.  It is also
worth noting that \kalashnikov specifications are significantly more concise than SyGuS specifications, as witnessed
by the total size of the specifications: the \kalashnikov specifications are around 11\% of the size of the
SyGuS ones.

We noticed that for a lot of the cases in which {\sc eSolver} timed out, \kalashnikov found a solution that involved
non-trivial constants.  Since {\sc eSolver} represents constants in unary (as chains of additions),
finding programs containing constants, or finding existentially quantified first order variables is very expensive.
\kalashnikov's strategies for finding and generalising constants make it much more efficient at this subtask.

\section{Conclusions}

We have shown that the synthesis fragment is 
well-suited for program verification by using it to directly encode safety, liveness and superoptimisation
properties. 

We built a decision procedure for $SF_\mathcal{D}$ via a reduction to 
finite state program synthesis. 
The synthesis algorithm is optimised for program analysis and uses a combination of symbolic model checking, explicit
state model checking and stochastic search. 
An important strategy is generalisation -- we find simple solutions that solve a restricted case of the specification, then
 try to generalise to a full solution.
We evaluated the program synthesiser on several static analysis problems, showing the tractability of the approach.  






{\footnotesize
\bibliography{all,thesis}{}
\bibliographystyle{splncs}}

\appendix
\section{Proofs}
\label{app:proofs}

\subsection{$SF_\mathcal{D}$} \label{app:finite-domain}

\begin{theorem}[Fagin's Theorem~\cite{fagin}]
\label{thm:fagin}
 The class of structures $A$ recognisable in time $|A|^k$, for some $k$, by a nondeterministic Turing machine
 is exactly the class of structures definable by existential second-order sentences.
\end{theorem}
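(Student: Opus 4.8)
The plan is to prove the two inclusions separately: that every class of finite structures definable by an existential second-order sentence is recognised in nondeterministic polynomial time, and conversely. The forward (easy) direction I would handle by a guess-and-verify argument. Suppose the class is defined by $\exists R_1 \ldots \exists R_m . \phi$ with $\phi$ first-order. Given a structure $A$ with $|A| = n$, a nondeterministic machine guesses an interpretation of each $R_i$; a relation of arity $a_i$ is just a bit-string of length $n^{a_i}$, hence polynomial in $n$. It then evaluates the fixed first-order sentence $\phi$ on the expanded structure. Since evaluating a fixed first-order formula over a finite structure can be done in polynomial time (iterate over each of the finitely many quantifier blocks), the whole procedure runs in nondeterministic polynomial time.

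The hard direction, $\mathrm{NP} \subseteq \exists\mathrm{SO}$, is where the real work lies. I would fix an NTM $M$ deciding the class in time $n^k$ and write an existential second-order sentence whose second-order witnesses encode an accepting computation tableau of $M$ on $A$, and whose first-order body asserts that the tableau is legal and accepting. The tableau has $n^k$ time steps and $n^k$ cells, so I index both coordinates by $k$-tuples of universe elements, giving exactly $n^k$ values each. To do arithmetic on these indices I would first existentially quantify a binary relation $<$ and assert, in first-order logic, that it linearly orders the universe; the induced lexicographic order on $k$-tuples then supplies ``first'', ``last'' and ``successor'' predicates, each definable by a fixed first-order formula.

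Next I would existentially quantify, for each of the finitely many tape symbols and machine states, relations recording the tape contents, head position and state at each (time, cell) pair. The first-order body would conjoin three requirements: (1) the row at time zero encodes the standard string representation of $A$ under the guessed order $<$ --- this is the only place the input structure is read, and it is expressible because that encoding writes down the relations of $A$ cell by cell; (2) every pair of consecutive rows is consistent with $M$'s transition relation, which, since that relation is a fixed finite object, reduces to checking a constant-size window around each cell and is therefore first-order; and (3) some cell in some row carries an accepting state.

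The main obstacle is this second direction, and within it two points deserve care. First is the tuple arithmetic: making ``successor on $k$-tuples'' first-order definable from $<$, and ensuring the guessed $<$ really is a linear order so the indexing is well-defined. Second, and more subtle, is the order-dependence of the encoding in requirement (1): the machine's input is a linearisation of $A$ that depends on the chosen order, but since the class of structures is closed under isomorphism, $M$ accepts one linearisation iff it accepts all of them, so existentially guessing an order and encoding $A$ accordingly is sound. Verifying that these first-order formulas faithfully capture $M$'s step relation is routine but tedious; I would isolate it as a lemma and not grind through the full case analysis.
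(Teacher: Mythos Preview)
Your proof sketch is the standard textbook argument for Fagin's theorem and is correct in outline; the two directions, the tableau encoding via $k$-tuples indexed by a guessed linear order, and the isomorphism-closure observation that justifies existentially guessing the order are all exactly right.

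However, there is nothing to compare against: the paper does not prove this theorem. It is stated with a citation to Fagin's original work and used as a black box in the proof of the next result (that $SF_\mathcal{D}$ is NEXPTIME-complete). So your proposal is not so much a different route as simply a proof where the paper offers none.
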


\begin{theorem}[$SF_\mathcal{D}$ is NEXPTIME-complete]
 For an instance of Definition~\ref{def:synthesis-frag} with $n$ first-order variables, where the ground terms are interpreted over $\mathcal{D}$,
 checking the truth of the formula is NEXPTIME-complete.
\end{theorem}

\begin{proof}
We will apply Theorem~\ref{thm:fagin}.  To do so we must establish the size of
the universe implied by Theorem~\ref{thm:fagin}.  Since Definition~\ref{def:synthesis-frag} uses
$n$ $\mathcal{D}$ variables, the universe is the set of interpretations
of the $n$ variables.  This set has size $|\mathcal{D}|^n$, and so by Theorem~\ref{thm:fagin},
Definition~\ref{def:synthesis-frag} over finite domains defines exactly the class sets recognisable in $(|\mathcal{D}|^n)^k$
time by a nondeterministic Turing machine.  This is the class NEXPTIME, and
so checking validity of an arbitrary instance of Definition~\ref{def:synthesis-frag} over $\mathcal{D}$
is NEXPTIME-complete.
\end{proof}

\subsection{Program Encodings}
\label{app:encodings}
We encode finite-state programs as loop-free
imperative programs consisting of a sequence of instructions, each instruction
consisting of an opcode and a tuple of operands.  The opcode specifies which
operation is to be performed and the operands are the arguments on which the
operation will be performed.  We allow an operand to be one of: a constant literal,
an input to the program, or the result of some previous instruction.  Such a program
has a natural correspondence with a combinational circuit.

A sequence of instructions is certainly a natural encoding of a program,
but we might wonder if it is the \emph{best} encoding.
We can show that for a reasonable set of instruction types (i.e.~valid opcodes), this encoding
is optimal in a sense we will now discuss.
An encoding scheme $E$ takes a function $f$ and assigns it a name $s$.  For a given ensemble
of functions $F$ we are interested in the worst-case behaviour of the encoding $E$, that is we are
interested in the quantitiy $$|E(F)| = \max \{ |E(f)| \mid f \in F \}$$
If for every encoding $E'$, we have that $$|E(F)| = |E'(F)|$$ then we say that $E$ is an \emph{optimal encoding}
for $F$.  Similarly if for every encoding $E'$, we have $$O(|E(F)|) \subseteq O(|E'(F)|)$$ we say that $E$ is
an \emph{asymptotically optimal encoding} for $F$.

\begin{lemma}[Languages with ITE are Universal and Optimal Encodings for Finite Functions]
\label{lem:ite-size}
 For an imperative programming language including instructions
 for testing equality of two values (EQ) and an if-then-else
 (ITE) instruction, any total function $f : \mathcal{S} \to \mathcal{S}$
 can be computed by a program of size $O(| \mathcal{S} | \log | \mathcal{S} |)$ bits.
\end{lemma}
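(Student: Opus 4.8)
The plan is to realise $f$ directly as its lookup table, compiled into a cascade of equality tests and if-then-else selections. Write $\mathcal{S} = \{s_1, \ldots, s_n\}$ with $n = |\mathcal{S}|$. Since $f$ is total, it is completely determined by the $n$ pairs $(s_i, f(s_i))$, and the program need only reproduce this table: on input $x$ it compares $x$ against each $s_i$ in turn and returns the matching $f(s_i)$, falling through to $f(s_n)$ as a default. The instruction set's EQ and ITE operations are precisely what is needed to express this cascade, so no further expressiveness is required.

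Concretely, I would build the program in SSA form from the bottom up. First emit a constant instruction holding the default value $f(s_n)$. Then, for $i = n-1$ down to $1$, emit an instruction $e_i = \mathtt{EQ}(x, s_i)$ together with an ITE instruction whose condition is $e_i$, whose then-branch is the constant $f(s_i)$, and whose else-branch is the result accumulated so far. The result of the final ITE is the program's output. A straightforward induction on the number of layers shows that the accumulated value after processing indices $i, i+1, \ldots, n$ equals $f(x)$ whenever $x \in \{s_i, \ldots, s_n\}$ and equals $f(s_n)$ otherwise; at $i = 1$ this yields exactly $f(x)$ for every $x \in \mathcal{S}$. The construction uses a constant number of instructions per table entry, hence $O(n) = O(|\mathcal{S}|)$ instructions in total.

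It then remains to bound the size in bits. Each instruction consists of an opcode drawn from the fixed instruction set, costing $O(1)$ bits independent of $\mathcal{S}$, together with a constant number of operands. Every operand is one of: a constant from $\mathcal{S}$, which can be named with $\lceil \log_2 |\mathcal{S}| \rceil$ bits; the input $x$, costing $O(1)$; or a reference to the result of an earlier instruction, and since there are only $O(|\mathcal{S}|)$ instructions such a reference needs $O(\log |\mathcal{S}|)$ bits. Thus every instruction occupies $O(\log |\mathcal{S}|)$ bits, and multiplying by the $O(|\mathcal{S}|)$ instructions gives the claimed bound $O(|\mathcal{S}| \log |\mathcal{S}|)$.

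I expect the only real subtlety to be the bit-accounting rather than the construction itself: one must charge each operand correctly, in particular observing that both back-references and constant literals cost $O(\log |\mathcal{S}|)$ bits, precisely because the table has $|\mathcal{S}|$ entries and the alphabet has $|\mathcal{S}|$ symbols, so the two logarithmic factors coincide. The correctness of the cascade is a routine case analysis on which guard $x$ first matches, and universality (every total finite function is computed by some such program) follows immediately, since $f$ was arbitrary.
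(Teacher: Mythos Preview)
Your proposal is correct and follows essentially the same approach as the paper: both construct the lookup table as a cascade of EQ tests feeding into chained ITE instructions (the paper writes out \texttt{t1 = EQ(x,1); t2 = ITE(t1, f(1), f(0)); t3 = EQ(x,2); t4 = ITE(t3, f(2), t2); \ldots}), and both charge $O(\log|\mathcal{S}|)$ bits per operand over $O(|\mathcal{S}|)$ instructions. Your version is in fact slightly more careful than the paper's, since you include the correctness induction and explicitly separate the three operand types in the bit accounting.
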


\begin{proof}
The function $f$ is computed by the following program:

 \begin{verbatim}
t1 = EQ(x, 1)
t2 = ITE(t1, f(1), f(0))
t3 = EQ(x, 2)
t4 = ITE(t3, f(2), t2)
...
 \end{verbatim}

Each operand can be encoded in $\log_2 (| \mathcal{S} | + l) = \log_2 (3 \times | \mathcal{S} |)$ bits.
So each instruction can be encoded in $O(\log | \mathcal{S} |)$ bits and there are $O(|\mathcal{S}|)$
instructions in the program, so the whole program can be encoded in $O(| \mathcal{S} | \log | \mathcal{S} |)$
bits.
\end{proof}

\begin{lemma}
\label{lem:large-encoding}
 Any representation that is capable of encoding an arbitrary total function $f : \mathcal{S} \to \mathcal{S}$
 must require at least $O(| \mathcal{S} | \log | \mathcal{S} |)$ bits to encode some functions.
\end{lemma}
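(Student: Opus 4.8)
The plan is to prove this lower bound by a counting argument, comparing the number of distinct functions against the number of distinct encodings of any fixed length. The key observation is that for any total function $f : \mathcal{S} \to \mathcal{S}$, there are exactly $|\mathcal{S}|^{|\mathcal{S}|}$ such functions, since each of the $|\mathcal{S}|$ inputs can be mapped independently to any of the $|\mathcal{S}|$ possible outputs. This is the ensemble $F$ of all total functions on $\mathcal{S}$, and any representation scheme capable of encoding arbitrary such functions must assign a distinct name to each one.

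The core of the argument is a pigeonhole principle applied to the encoding. First I would observe that any encoding scheme $E$ uses strings over some finite alphabet; without loss of generality we may take the alphabet to be binary, so an encoding of length $b$ bits can name at most $2^b$ distinct objects. If every function in $F$ were encodable using at most $b$ bits, then since $E$ must give distinct names to distinct functions, we would need $2^b \geq |F| = |\mathcal{S}|^{|\mathcal{S}|}$. Taking logarithms base $2$, this forces
\[
 b \geq \log_2 \left( |\mathcal{S}|^{|\mathcal{S}|} \right) = |\mathcal{S}| \log_2 |\mathcal{S}|.
\]
Hence at least one function in $F$ requires an encoding of at least $|\mathcal{S}| \log_2 |\mathcal{S}|$ bits, which is exactly the claimed $O(|\mathcal{S}| \log |\mathcal{S}|)$ bound on the worst case.

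Next I would make the asymptotic phrasing precise to match the paper's notation. The statement asks that \emph{some} function require $O(|\mathcal{S}| \log |\mathcal{S}|)$ bits, so the goal is a lower bound on $|E(F)| = \max \{ |E(f)| \mid f \in F \}$. The counting argument shows directly that $|E(F)| \geq |\mathcal{S}| \log_2 |\mathcal{S}|$ for every encoding $E$, since if all functions fit in fewer bits the pigeonhole inequality above is violated. Combined with Lemma~\ref{lem:ite-size}, which exhibits an encoding achieving $O(|\mathcal{S}| \log |\mathcal{S}|)$ in the worst case, this establishes that the instruction-sequence representation is asymptotically optimal in the sense defined earlier in the excerpt.

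The main obstacle, though it is largely a matter of careful bookkeeping rather than deep difficulty, is handling encodings that are not fixed-length. A prefix-free or otherwise self-delimiting code could in principle assign short strings to some functions, so the counting bound should be stated in terms of the \emph{longest} codeword rather than a uniform length: among $|\mathcal{S}|^{|\mathcal{S}|}$ distinct binary strings, at least one must have length at least $\log_2(|\mathcal{S}|^{|\mathcal{S}|})$, because there are strictly fewer than $2^b$ binary strings of length less than $b$. This refinement secures the worst-case claim for arbitrary (even variable-length) encodings and completes the proof.
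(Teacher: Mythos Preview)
Your proof is correct and follows essentially the same approach as the paper: count the $|\mathcal{S}|^{|\mathcal{S}|}$ total functions and apply the pigeonhole principle to conclude that some function needs at least $\log_2(|\mathcal{S}|^{|\mathcal{S}|}) = |\mathcal{S}|\log_2|\mathcal{S}|$ bits. Your treatment is more careful than the paper's (which is a two-sentence sketch), in particular your handling of variable-length encodings and the explicit connection to asymptotic optimality via Lemma~\ref{lem:ite-size}, but the underlying argument is identical.
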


\begin{proof}
 There are $| \mathcal{S} |^{| \mathcal{S} |}$ total functions $f : \mathcal{S} \to \mathcal{S}$.
 Therefore by the pigeonhole principle, any encoding that can encode an arbitrary function must use
 at least $\log_2 (| \mathcal{S} |^{| \mathcal{S} |}) = O(| \mathcal{S} | \log_2 | \mathcal{S} |)$
 bits to encode some function.
\end{proof}

From Lemma~\ref{lem:ite-size} and Lemma~\ref{lem:large-encoding}, we can conclude that \emph{any}
set of instruction types that include ITE is an asymptotically optimal function encoding for total functions with
finite domains.

\subsection{Complexity of Finite State Program Synthesis}

\begin{theorem}[$SF_\mathcal{D}$ is Polynomial Time Reducible to Finite Synthesis]
 Every instance of 
Definition~\ref{def:synthesis-frag}, where the ground terms are interpreted over $\mathcal{D}$
 is polynomial time reducible to an instance
 of Definition~\ref{def:finite-synth-formula}.
\end{theorem}


\begin{proof}
 We first Skolemise the instance of definition~\ref{def:synthesis-frag} to produce an equisatisfiable
 second-order sentence with the first-order part only having universal quantifiers
 (i.e. bring the formula into Skolem normal form).  This process will have introduced
 a function symbol for each first order existentially quantified variable and taken
 linear time.  Now we just existentially quantify over the Skolem functions, which
 again takes linear time and space.
 The resulting formula is an instance of Definition~\ref{def:finite-synth-formula}.
\end{proof}

\subsection{Soundness and Completeness}

\begin{theorem}\label{thm:synth-sound}
Algorithm~\ref{alg:cegis} is sound -- if it terminates with witness $P$, then
$P \models \sigma$.
\end{theorem}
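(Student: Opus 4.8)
The plan is to observe that Algorithm~\ref{alg:cegis} has exactly one exit path that yields a witness, and that this path forces $P$ to have passed the full verification check. First I would note that {\sc refinement loop} returns a program $P$ (as opposed to {\sc unsat}, or diverging) only through its \texttt{return candidate} statement, and that this statement is reached only in the branch guarded by \texttt{res = valid}, where \texttt{res} is the value returned by {\sc verif} applied to \texttt{candidate}. Hence, if the algorithm terminates with witness $P$, then the call {\sc verif}$(P)$ must have returned \texttt{valid}.

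Next I would unfold the definition of {\sc verif}. On input $P$ it forms the query $\exists x . \lnot \sigma(x, P)$, invokes the underlying decision procedure \texttt{decide}, and returns \texttt{valid} precisely when that query is reported unsatisfiable. Therefore termination with witness $P$ implies that $\exists x . \lnot \sigma(x, P)$ is unsatisfiable. Assuming \texttt{decide} is a sound decision procedure for the background theory---so that an \texttt{unsatisfiable} verdict is never spurious---it follows that no input $x$ satisfies $\lnot \sigma(x, P)$.

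Finally I would close the argument with the elementary equivalence between unsatisfiability of $\exists x . \lnot \sigma(x, P)$ and validity of $\forall x . \sigma(x, P)$. The latter is exactly the claim that $P$ satisfies the finite synthesis formula of Definition~\ref{def:finite-synth-formula}, i.e.\ $P \models \sigma$, which is what was to be shown.

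The argument is short, and its only real content is isolating the single witness-producing exit and reading off what \texttt{valid} means. The main obstacle---indeed the sole nontrivial assumption---is the soundness of \texttt{decide}: the whole result reduces to the claim that when the inner solver reports the verification query unsatisfiable, it genuinely is. For the \newC instantiation this is discharged by the decidability of \newC safety (a single, appropriately bounded BMC query is a complete decision procedure), so soundness of the outer CEGIS loop follows from soundness of the inner decision procedure. I would stress that soundness here is independent of {\sc synth} being correct and of whether the loop terminates: a returned witness is trustworthy purely because it survived the exact verification check.
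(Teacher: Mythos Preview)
Your argument is correct and follows the same route as the paper: observe that a witness is returned only when {\sc verif} reports \texttt{valid}, unfold {\sc verif} to see this means $\exists x.\lnot\sigma(x,P)$ is unsatisfiable, and conclude $\forall x.\sigma(x,P)$. You are in fact more explicit than the paper in flagging the soundness of \texttt{decide} as the hinge assumption, but the structure is identical.
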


\begin{proof}
 The procedure {\sc synth} terminates only if {\sc synth} returns ``valid''.  In that
 case, $\exists x . \lnot \sigma(P, x)$ is unsatisfiable and so $\forall x . \sigma(P, x)$ holds.
\end{proof}

\begin{lemma}
 \label{thm:synth-semi-complete}
 Algorithm~\ref{alg:cegis} is semi-complete -- if a solution $P \models \sigma$
 exists then Algorithm~\ref{alg:cegis} will find it.
\end{lemma}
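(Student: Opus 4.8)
The plan is to prove termination-with-a-witness by exploiting the finiteness of the input domain $\mathcal{D}$: I will show that the {\sc inputs} set grows strictly at every non-terminating iteration of the refinement loop, so it saturates after finitely many steps, at which point a correct solution is forced out. Throughout I assume the background decision procedure invoked by {\sc synth} and {\sc verif} is sound and complete, so that {\sc decide} correctly reports satisfiability and returns a genuine model or counterexample; this is the only property of the background theory the argument needs.

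First I would fix a solution $P^\star \models \sigma$, i.e.\ $\forall x \in \mathcal{D}.\,\sigma(x, P^\star)$, and observe that on every iteration the synthesis query $\exists P.\,\bigwedge_{j=1}^{N} \sigma(i_j, P)$ is satisfiable, witnessed by $P^\star$, since $P^\star$ satisfies $\sigma$ on \emph{every} input and in particular on the finite subset $\{i_1,\dots,i_N\}$ currently held in {\sc inputs}. Consequently {\sc synth} never returns UNSAT while a solution exists, so the only way the algorithm can halt is through the branch where {\sc verif} returns ``valid''; and any such halt yields a genuine witness by Theorem~\ref{thm:synth-sound}. Note that the witness eventually returned need not be $P^\star$ itself — it suffices that it is \emph{some} program satisfying the full specification.

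The heart of the argument, and the step I expect to be the main obstacle, is showing that each non-terminating iteration enlarges {\sc inputs} by a genuinely \emph{new} element. Suppose an iteration does not terminate. Then {\sc synth} returned a candidate $P$ with $\sigma(i, P)$ holding for every $i \in \textsc{inputs}$ (using completeness of {\sc decide}, so that the returned model really does satisfy every conjunct), while {\sc verif} returned a counterexample $x$ with $\lnot\sigma(x, P)$. I would argue $x \notin \textsc{inputs}$ by contradiction: if $x$ were already present, the synthesis query would have demanded $\sigma(x, P)$, contradicting $\lnot\sigma(x, P)$. Hence adding the fresh counterexample makes $\textsc{inputs}$ strictly larger. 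This step is exactly where the interplay between the two procedures — candidate correct on all stored inputs, counterexample falsifying on a new point — is essential.

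Finally I would close with a cardinality bound. Since $\textsc{inputs} \subseteq \mathcal{D}$ strictly increases each non-terminating iteration and $\mathcal{D}$ is finite, the loop executes at most $|\mathcal{D}|$ iterations. If it has not already halted, then once $\textsc{inputs} = \mathcal{D}$ the synthesis query becomes exactly $\exists P.\,\forall x \in \mathcal{D}.\,\sigma(x, P)$, so the candidate $P$ returned by {\sc synth} satisfies $\sigma$ on all of $\mathcal{D}$; the ensuing {\sc verif} call then finds $\exists x.\,\lnot\sigma(x, P)$ unsatisfiable and returns ``valid'', and the algorithm halts with a correct witness. This establishes semi-completeness. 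It is worth stressing that, unlike the full completeness result (Theorem~\ref{thm:finite-synth-complete}), this argument needs neither the stopping condition of Section~\ref{sec:stopping-condition} nor the specific \newC instantiation: finiteness of $\mathcal{D}$ together with a complete background decision procedure already suffices, because semi-completeness makes no claim about the unsatisfiable case.
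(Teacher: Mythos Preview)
Your argument is correct and follows the same overall shape as the paper's: finiteness of the input domain bounds the number of refinement iterations, and the existence of a satisfying $P^\star$ guarantees that {\sc synth} never reports UNSAT, so the loop must eventually exit through the ``valid'' branch. Your treatment of why each counterexample is genuinely fresh is in fact more explicit than the paper's, which simply asserts that after $|X|$ iterations all inputs have been added.

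The one genuine difference in framing is how termination of {\sc synth} itself is justified. You treat {\sc decide} as an abstract sound-and-complete oracle and conclude that {\sc synth} always returns a candidate because $P^\star$ witnesses satisfiability of the synthesis query. The paper instead appeals to the concrete search strategy: it argues that the {\sc ExplicitSearch} routine enumerates all programs in order of length, so it will eventually enumerate one meeting the current finite specification, and then invokes decidability of the first-order theory for the {\sc verif} side. Your abstraction is cleaner and, as you correctly note, shows that semi-completeness does not depend on the \newC instantiation or on any particular candidate-generation strategy; the paper's version, by contrast, ties the lemma to the specific enumeration mechanism actually implemented. Both are valid, and the conclusions coincide.
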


\begin{proof}
 If the domain $X$ is finite then the loop in procedure {\sc synth} can only
 iterate $| X |$ times, since by this time all of the elements of $X$ would have been
 added to the inputs set.  Therefore if the {\sc synth} procedure always terminates,
 Algorithm~\ref{alg:cegis} does as well.

 Since the {\sc ExplicitSearch} routine enumerates all programs (as can be seen by induction on
 the program length $l$), it will eventually enumerate a program that meets the specification
 on whatever set of inputs are currently being tracked, since by assumption such a program
 exists.  Since the first-order theory is
 decidable, the query in {\sc verif} will succeed for this program, causing the algorithm to terminate.
 The set of correct programs is therefore recursively enumerable and Algorithm~\ref{alg:cegis}
 enumerates this set, so it is semi-complete.
\end{proof}

\begin{theorem}
 \label{thm:finite-synth-complete}
 Algorithm~\ref{alg:cegis} with the stopping condition described in Section~\ref{sec:stopping-condition}
 is complete when instantiated with \newC as a background theory -- it will terminate for all specifications $\sigma$.
\end{theorem}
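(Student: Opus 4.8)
The plan is to reduce the termination claim to three facts and then split on whether $\sigma$ is satisfiable. The three facts are: (i) every call to \textsc{synth} terminates, (ii) every call to \textsc{verif} terminates, and (iii) the refinement loop executes only finitely many iterations. Granting these, Algorithm~\ref{alg:cegis} always halts, and it then remains to check that the verdict it returns is correct. For the satisfiable case this follows already from Lemma~\ref{thm:synth-semi-complete} together with Theorem~\ref{thm:synth-sound}; since any satisfiable specification has a witness of length at most $2^n$ (where $n$ is the input bit-width), the $2^n$ bound of the stopping condition never cuts off the search prematurely, so semi-completeness is preserved. The genuinely new content is the unsatisfiable case, handled via the stopping condition of Section~\ref{sec:stopping-condition}.

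First I would dispatch (ii): the query inside \textsc{verif} is a safety question about a \newC program, and since safety of \newC programs is decidable, the underlying \texttt{decide} call returns in finite time. For (i), recall that the instantiated \textsc{synth} does not solve a single monolithic query but enumerates candidate lengths $l = 1, 2, \ldots$, solving a decidable \newC safety query at each length. The stopping condition caps this search: once $l$ would exceed $2^n$, \textsc{synth} halts and reports \textsc{unsat}. Hence \textsc{synth} either returns a candidate at some length $l \le 2^n$ or terminates with \textsc{unsat} after finitely many decidable queries.

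For (iii), I would use finiteness of the input domain exactly as in Lemma~\ref{thm:synth-semi-complete}. Because the ground terms range over the finite domain $\mathcal{D}$, the set $X$ of possible inputs is finite. Every iteration of the loop that does not return instead appends the fresh counterexample produced by \textsc{verif} to the \textsc{inputs} set, and this counterexample is by construction one on which the current candidate fails and was therefore not previously present. The \textsc{inputs} set thus strictly grows as a subset of the finite set $X$, so the loop can iterate at most $|X|$ times.

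It then remains to certify the verdict. If the loop returns a candidate $P$, soundness (Theorem~\ref{thm:synth-sound}) gives $P \models \sigma$. If it returns \textsc{unsat}, I would argue by contraposition: by Theorems~\ref{thm:l-universal} and~\ref{thm:l-concise}, any satisfiable specification possesses a witness computed by a program of at most $2^n$ instructions; such a witness meets $\sigma$ on all inputs, hence in particular on whatever finite subset of $X$ is currently tracked, so \textsc{synth} would have found some program of length at most $2^n$ rather than exhausting the length bound. Since it did not, $\sigma$ admits no witness at all. I expect this last monotonicity step to be the main obstacle worth stating carefully: the stopping condition may be triggered while only a proper subset of inputs is tracked, yet it must certify \emph{global} unsatisfiability, and the argument hinges precisely on the observation that a global witness is automatically a witness for every sub-specification, combined with the universality bound of $2^n$ instructions.
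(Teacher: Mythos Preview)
Your proposal is correct and follows the same two-case split as the paper's proof: invoke semi-completeness (Lemma~\ref{thm:synth-semi-complete}) for satisfiable specifications and the stopping condition of Section~\ref{sec:stopping-condition} for unsatisfiable ones. The paper's proof is two sentences and leaves implicit exactly the points you spell out---termination of the individual \textsc{synth}/\textsc{verif} calls, the $|X|$ bound on loop iterations, and in particular the monotonicity argument that an \textsc{unsat} verdict on a sub-specification certifies global unsatisfiability---so your version is a strict elaboration rather than a different route.
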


\begin{proof}
 If the specification is satisfiable then Theorem~\ref{thm:synth-semi-complete} holds, and if it is not
 then the stopping condition will eventually hold at which point we (correctly) terminate with an ``unsatisfiable'' verdict.
\end{proof}

\section{Counterexample Guided Inductive Synthesis (CEGIS)} 
\label{app:cegis}

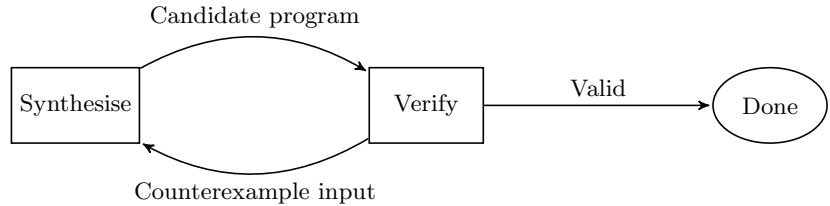
\begin{figure}
 \centering
 \begin{tikzpicture}[scale=0.5,->,>=stealth',shorten >=1pt,auto,
 semithick, initial text=]

  \matrix[nodes={draw, fill=none, scale=1, shape=rectangle, minimum height=1cm, minimum width=1.5cm},
          row sep=2cm, column sep=3cm] {
   \node (synth) {Synthesise};
   &
   \node (verif) {Verify}; 
   &
   \node[ellipse] (done) {Done}; \\
  };

   \path
    (synth) edge [bend left] node {Candidate program} (verif)
    (verif) edge [bend left] node {Counterexample input} (synth)
    (verif) edge node {Valid} (done);
 \end{tikzpicture}
 
 \caption{Abstract synthesis refinement loop
 \label{fig:abstract-refinement}}
\end{figure}

\section{Encoding in \newC}

\begin{figure}
\begin{center}
{\small

\setlength{\tabcolsep}{14pt}
Integer arithmetic instructions:

\begin{tabular}{llll}
 \verb|add a b| & \verb|sub a b| & \verb|mul a b| & \verb|div a b| \\
 \verb|neg a| &   \verb|mod a b| & \verb|min a b| & \verb|max a b|
\end{tabular}

\medskip

Bitwise logical and shift instructions:

\begin{tabular}{lll}
 \verb|and  a b| & \verb|or   a b| & \verb|xor a b| \\
 \verb|lshr a b| & \verb|ashr a b| & \verb|not a|
\end{tabular}

\medskip

Unsigned and signed comparison instructions:

\begin{tabular}{lll}
 \verb|le  a b| & \verb|lt  a b| & \verb|sle  a b| \\
 \verb|slt a b| & \verb|eq  a b| & \verb|neq  a b| \\
\end{tabular}

\medskip

Miscellaneous logical instructions:

\begin{tabular}{lll}
 \verb|implies a b| & \verb|ite a b c| &  \\
\end{tabular}

\medskip

\setlength{\tabcolsep}{12pt}

Floating-point arithmetic:

\begin{tabular}{llll}
 \verb|fadd a b| & \verb|fsub a b| & \verb|fmul a b| & \verb|fdiv a b|
\end{tabular}

}
\end{center}

 \caption{The language $\mathcal{L}$}
 \label{fig:l-language}
\end{figure}




  



\end{document}